\newcommand{\KeyGen}{\mathtt{KeyGen}}
\newcommand{\Dec}{\mathtt{Dec}}
\newcommand{\Enc}{\mathtt{Enc}}
\newcommand{\A}{\mathcal{A}}
\newcommand{\B}{\mathcal{B}}
\begin{document}
\frontmatter          
\mainmatter              
\title{Public Key Encryption in Non-Abelian Groups}
\titlerunning{A New Miniature CCA2 Public key Encryption scheme}

\author{Haibo Hong$^1$\thanks{Corresponding Author: honghaibo1985@163.com}, Jun Shao$^1$, Licheng Wang$^2$, Haseeb Ahmad$^2$ \and Yixian Yang$^2$ }
\authorrunning{H.~Hong, L.~Wang, J.~Shao, H.~Ahmad and Y.~Yang}
\institute{$^1$School of Computer Science and Information Engineering, Zhejiang Gongshang University, Hangzhou, 3100018 P.R. China  \\
$^2$Information Security Center, State Key Laboratory of Networking and Switching Technology, Beijing University of Posts and Telecommunications, Beijing, 100876 P.R. China}

\maketitle              

\begin{abstract}
In this paper, we propose a brand new public key encryption scheme in the Lie group that is a non-abelian group. In particular, we firstly investigate the intractability assumptions in the Lie group, including the non-abelian factoring assumption and non-abelian inserting assumption. After that, by using the FO technique, a CCA secure public key encryption scheme in the Lie group is proposed. At last, we present the security proof in the random oracle based on the non-abelian inserting assumption.
%
%
%
%

\noindent \textbf{Key words.} Lie groups, exponential mapping, public key encryption, non-abelian factoring assumption, non-abelian inserting assumption \\

\end{abstract}

\section{Introduction}
Currently, most asymmetric cryptographic primitives are based on the perceived intractable problems in number theory, such as the integer factorization problem and discrete logarithm problem. However, due to Shor's and other quantum algorithms \cite{S97,PZ03} for solving the integer factorization problem and discrete logarithm problem, the known public key cryptosystems based on these two assumptions would be broken, when quantum computers become practical. Recent advances in quantum computers shows that the time is coming \cite{news}.
Therefore, it is an imminent work to search for more complex mathematical platforms and to design effective cryptographic schemes, which can resist against quantum attacks.

To deal with the crisis of cryptography in quantum era,  cryptographers has began to pay more attention towards non-commutative cryptography based on non-commutative algebraic structures. One of the outstanding properties of non-commutative cryptography is that it can take the advantage of intractable problems in quantum computing, combinatorial group theory and computational complexity theory to constructing cryptographic platforms. This extension has a profound background and rich connotation. First, from the viewpoint of the platforms, non-commutative cryptography extends the research territory of cryptography. A large number of non-commutative algebraic structures are now waiting to be explored for new public key cryptosystems. Second, due to the ability of resisting against quantum attacks, non-commutative cryptography is expected to achieve a higher strength. It is well known that non-commutative algebraic structures can increase the hardness of some mathematical problems significantly. For instance, we already know that how to design efficient quantum algorithms for solving hidden subgroup problems in any abelian group, but we are still unable to construct efficient algorithms for dealing hidden subgroup problem in non-abelian groups \cite{R06}.

Most of cryptosystems in non-commutative cryptography are derived from combinatorial group theory, but they are mainly theoretical or have certain limitations in wider and general practice. This is perhaps due to the lack of appropriate description of group elements and operations or the difficulty of implementing cryptosystems in practical domains. The non-abelian group (Lie group) used in this paper is quite simple with clear description of group elements and operations and it is easy to implemented.

\subsection{Our Motivations and Contributions}
Lie groups have important applications in many branches of physics and mathematics such as mathematical analysis, differential geometry, topology and quantum mechanics. Lie theory originated from Lie's idea that extends the Galois theory for algebraic equations to the differential equations \cite{H03}. From its beginning, Lie theory was inextricably linked with the developments of algebra, analysis and geometry. As the important measure of algebraic properties of Lie groups, Lie algebras play an indispensable tool while studying matrix Lie groups. On the one hand, Lie algebras are simpler than Lie groups. On the other hand, the Lie algebra of a matrix Lie group contains much information about that group.

In Lie theory, matrix Lie groups are important among the types of Lie groups and have classical matrix forms with their Lie algebras. After exploring cryptographic aspects of Lie theory, we extracted an interesting discovery: the exponential mapping between Lie groups and Lie algebras can be viewed as a non-abelian analog of exponent operation in finite fields. While being different from exponent operation in finite fields, the exponential mapping is the usual power series of Lie algebras, and the image set is indeed Lie groups. Besides, there are different intractable assumptions between them: exponent operation in finite fields is based on DLP in finite fields; the exponential mapping is based on solving root problem of high degree polynomial equation in one variate, which can be viewed as a variant version on matrices. Currently, there are no direct formulas to solve this problem rather than degrading the degree of the equation step by step. When the variant is matrix, the complexity increases rapidly. Therefore, combing cryptographic aspects of the exponential mapping, we probe some cryptographic applications based on Lie theory.

In this paper, we come up with a series of intractable assumptions based on the exponential mapping in Lie theory, including the non-abelian factoring assumption and non-abelian inserting assumption. Subsequently, we propose a CCA secure public key encryption scheme by using the FO technique \cite{PKE:FO99b}. We also give the security proof in the random oracle based on the new assumption.

\subsection{Related Works}
It is always the most important thing to study the underlying intractable hypothesis of mathematical problems for cryptographic primitives. Regarding the
non-commutative cryptography, this kind of study started from 1980's when the
difficult problems in group theory were applied into cryptography. In 1984, Wagner et al.\cite{WM84} designed a public key cryptosystem based on undecidable word problem in groups and semigroups. In 2000, Ko et al. \cite{KLCHKP00} developed braid group cryptography based on the intractable assumption of conjugate search problem in braid group. In 2004, Eick and Kahrobaei \cite{EK04} devised a new cryptosystem based on the polycyclic group. In 2005, Shpilrain and Ushakov \cite{SU05} put forward a new public key cryptosystem by using Thomsen group. Since 2011, Kahrobaei et.al \cite{KKS13,KKS14,KA13,HKK13,MSU11} devised several new key exchange schemes and public key encryption schemes based on group ring matrix, corresponding intractable assumptions are reported to be DLP and FP in group ring matrix, respectively. Unfortunately,
most of the above cryptographic schemes are not secure\cite{MSU11}

At the same time, a type of cryptosystems based on the intractable assumption in non-abelian group---group factorization problem (GFP) has gradually become a typical representative of non-commutative cryptography and achieved rapid development in recent thirty years. The first work in this type of cryptosystems is the symmetric cryptosystem|PGM based on a special factorization basis in finite permutation groups|logarithmic signature (LS) proposed by Magliveras in 1986\cite{M86}. The algebraic properties of PGM were studied more deeply in \cite{MM89,MM90,MM92,CV06}, algebraic properties of PGM were discussed in detail. In 2002, In 2002, Magliveras et al. \cite{MST02} put forward a trapdoor permutation function and two public key cryptosystems MST1 and MST2 by employing LS in finite non-abelian groups. In 2009, Magliveras et al. \cite{LMTW09} devised a new public key cryptographic system---$MST_3$ based on random covers and LS in finite non-abelian groups. Meanwhile, Magliveras et al. proposed a practical platform--Suzuki 2-group for the first time \cite{H63} and devised MST cryptosystems into practice. However, some of the weaknesses are found in MST
series cryptosystems \cite{MSTZ08,BCM09,VPD10,ST10}. In 2008, Magliveras et al. \cite{MSTZ08} provided a comprehensive analysis of $MST_3$ cryptosystem and stated that transitive LS is not suitable for $MST_3$ cryptosystem. In 2009, Blackburn et al. \cite{BCM09} pointed out that amalgamated LS is also not a reasonable choice for MST cryptosystems. In 2010, Vasco et al. \cite{VPD10} presented a more profound analysis of $MST_3$ and showed that the intractability assumption GFP doesn't always hold for random cover of group $G$. The authors also discussed that MST3 cryptosystem cannot achieve
one-wayness in chosen plaintext attack model, let alone the indistinguishability
against adaptive chosen ciphertext attacks. Therefore, in 2010, Svaba et al. \cite{ST10} constructed a more secure cryptosystem $eMST_3$ by employing a secret homomorphic map. Moreover, the authors analyzed all of the published references about attacking MST cryptosystems and developed a set of weak key test tool for $eMST_3$ cryptosystem. It was claimed that bad LSs can be replaced by employing presented tool. But until now, there is no valid evidence showing that this method is reasonable and effective.

Though there are many non-commutative cryptosystems proposed till now,
none of them are proven secure against chosen ciphertext attacks.

\subsection{Paper Organization}
The remaining paper is organized as follows.
In Section~\ref{sec:definition}, we will review the related results in Lie groups, and propose our new assumptions. In Section \ref{sec:proposal}, we present our CCA public key encryption in Lie groups with along its security analysis and efficiency analysis. At last, we conclude
the paper in Section~\ref{sec::conclusions}.

\section{Definitions}\label{sec:definition}
In this section, we will review the definitions related to Lie groups, and propose the non-abelian discrete logarithm (NAF) problem and non-abelian inserting (NAI) problem, as well as the hardness analysis. For clarity, we would like to introduce the notations used in this paper.
\begin{table}
  \centering
  \caption{Notations used in this paper.}\label{table::notations}
  \begin{tabular}{|l|l|}
    \hline
    $\mathbb{R}$ & set of real numbers \\
    $\mathbb{C}$ & set of complex numbers \\
    $\mathbb{Z}$  & set of integers\\
     $M_n(\mathbb{C})$ & set of $n\times n$ complex matrices\\
    $GL_n(\mathbb{C})$ & set of all invertible $n\times n$ matrices with complex entries\\
    $p$ & large prime number\\
    $M_n(p)$ & set of $n\times n$ matrices with entries in $\mathbb{Z}_p$ \\
    $GL_n(p)$ & set of all invertible $n\times n$ matrices with entries in $\mathbb{Z}_p$\\
    $\exp$ & natural logrithm\\
    \hline
  \end{tabular}
\end{table}

\subsection{Matrix Exponential and One-Parameter Subgroup}

In this section, we review several classical conclusions in Lie theory including
matrix exponential and one-parameter subgroup. Actually, we directly copy the
results from \cite{H03}.

\begin{definition}[Matrix Exponential]\cite{H03}
Let $X\in M_n(\mathbb{C})$ be an $n\times n$ complex matrix, then the matrix exponential of $X$ is defined as the usual power series $\exp^X=\sum_{m=0}^{\infty}\frac{X^m}{m!}$. In case when $X$ is a nilpotent matrix, $\exp^X=\sum_{m=0}^{\ell}\frac{X^m}{m!}$, where~$\ell$ is the nilpotent index of $X$.
\end{definition}

It is easy to see that $M_n(\mathbb{C})$ along with the multiplication operation construct a semigroup.

\begin{proposition}\cite{H03}
Let $X$ and $Y$ be arbitrary $n\times n$ matrices. Then, we have the following:

\begin{enumerate}
  \item $\exp^0=I_n$.
  \item $\exp^X$ is invertible and $(\exp^X)^{-1}=\exp^{-X}$.
  \item $\exp^{(\alpha+\beta)X}=\exp^{\alpha X}\cdot \exp^{\beta X}$ for all $\alpha$ and $\beta$ in $\mathbb{C}$.
  \item If $XY=YX$, then $\exp^{X+Y}=\exp^X\cdot\exp^Y=\exp^Y\cdot\exp^X$.

\end{enumerate}

\end{proposition}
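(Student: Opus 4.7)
The plan is to prove the four items in an order that lets the later ones rely on the earlier ones, establishing item~(4) first in essence (since (2) and (3) will be corollaries of it together with (1)). I would begin with (1), which is immediate from the definition: only the $m=0$ term survives in $\sum_{m\ge 0} X^m/m!$ when $X=0$, and that term is $I_n$ by the convention $X^0 = I_n$. This disposes of (1) with one line.

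Next I would tackle (4), the substantive identity, and derive (2) and (3) as corollaries. Before computing, I would recall that the power series $\sum_m X^m/m!$ converges absolutely in any submultiplicative matrix norm (since $\|X^m\|/m! \le \|X\|^m/m!$ and $\sum \|X\|^m/m! = e^{\|X\|}$), so all subsequent manipulations of double series are justified by the absolute convergence of the Cauchy product. Under the hypothesis $XY = YX$, the binomial theorem holds in the matrix setting, giving $(X+Y)^m = \sum_{k=0}^{m} \binom{m}{k} X^k Y^{m-k}$. Then
\begin{equation*}
\exp^{X+Y} = \sum_{m=0}^{\infty} \frac{(X+Y)^m}{m!}
= \sum_{m=0}^{\infty} \sum_{k=0}^{m} \frac{X^k}{k!}\,\frac{Y^{m-k}}{(m-k)!}
= \Bigl(\sum_{k=0}^{\infty} \frac{X^k}{k!}\Bigr)\Bigl(\sum_{j=0}^{\infty} \frac{Y^j}{j!}\Bigr)
= \exp^X \cdot \exp^Y,
\end{equation*}
where the middle equality is the Cauchy product rearrangement, and the same computation with the roles of $X$ and $Y$ swapped yields $\exp^Y\cdot\exp^X$, proving (4).

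Items (2) and (3) then fall out as consequences. For (2), $X$ and $-X$ trivially commute, so (4) gives $\exp^X\cdot\exp^{-X} = \exp^{X+(-X)} = \exp^0 = I_n$ by (1); symmetrically $\exp^{-X}\cdot\exp^X = I_n$, so $\exp^X$ is invertible with inverse $\exp^{-X}$. For (3), the matrices $\alpha X$ and $\beta X$ commute (both are scalar multiples of the same matrix $X$), so (4) directly gives $\exp^{(\alpha+\beta)X} = \exp^{\alpha X + \beta X} = \exp^{\alpha X}\cdot\exp^{\beta X}$.

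The only real obstacle is the rearrangement in the proof of (4); everything else is bookkeeping. To make that step fully rigorous I would either (a) cite absolute convergence of the matrix exponential and invoke the standard theorem on Cauchy products of absolutely convergent series in a Banach algebra, or (b) truncate both series at level $N$, bound the tail by $\|X\|, \|Y\|$-dependent quantities that vanish as $N\to\infty$, and pass to the limit. Nilpotence is not needed here: the general convergent case suffices, and the nilpotent case in the definition is only a simplification (the series terminates automatically).
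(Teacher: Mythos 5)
Your proof is correct and is essentially the standard argument from Hall's textbook, which is the source the paper cites for this proposition; the paper itself supplies no proof, simply importing the result from \cite{H03}. Your ordering (prove (1), then (4) via the binomial theorem and Cauchy product for commuting matrices, then deduce (2) and (3) as corollaries) is exactly the canonical route, and your attention to absolute convergence to justify the series rearrangement is the right point to flag.
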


Item 3 shows that for an arbitrary matrix $X$, the power series $\exp^X$ is an invertible matrix and belongs to $GL_n(\mathbb{C})$. Item 4 describes that the commutativity of $\exp^X$ and $\exp^Y$ depends on the matrices $X$ and $Y$.

\begin{definition}[One-Parameter Subgroup]\cite{H03}
A function $F:\mathbb{R}\rightarrow GL_n(\mathbb{C})$ is called a one-parameter subgroup of ~$GL_n(\mathbb{C})$ if
\begin{enumerate}
  \item $F$ is continuous;
  \item $F(0)=I_n$;
  \item $F(t+s)=F(t)F(s)$ for all $t,s\in\mathbb{R}$.
\end{enumerate}

\end{definition}

\begin{property}\label{tb:1}\cite{H03}
If $F$ is a one-parameter subgroup of $GL_n(\mathbb{C})$, then there exists a unique $n\times n$ complex matrix $X\in M_n(\mathbb{C})$ such that
\begin{center}
  $F(t)=\exp^{tX}$
\end{center}
\end{property}


In Lie theory, $\exp^{tX}$ is the exponential mapping from a Lie algebra $X$ to its Lie group. Meanwhile, when $X$ is given, $F(t)=\exp^{tX}\in GL_n(\mathbb{C})$ is an injection and a one-way function. Specially, the injection property is implied by Proposition 1 (items 1,2,3), and the one-wayness is due to the intractable assumptions of solving high degree root problem of polynomial equation in one variate \cite{M75,MS97,J12}.

\subsection{Non-abelian Factoring Problem and Non-abelian Inserting Problem}
By using the results reviewed above, we can propose two hard problems: non-abelian factoring (NAF) problem and non-abelian inserting (NAI) problem.

\begin{definition}[Non-abelian Factoring(NAF) Problem ]
Let $\mathbb{M}=M_n(p)$ be a semigroup with respect to multiplication operation, and $\mathbb{G}=GL_n(p)$ the general linear group with respect to multiplication operation. Let $R, T\in \mathbb{M}$ $(R\neq T)$ be two random nilpotent matrices. The factoring problem with respect to $\mathbb{G}, R, T$, denoted by $\mathtt{NAF}_{\exp^R,\exp^T}^\mathbb{G}$, is to factor the given product $\exp^{xR}\cdot\exp^{yT}\in \mathbb{G}$ into a pair $(\exp^{xR},\exp^{yT})\in \mathbb{G}^2$.
\end{definition}

Now, let's analyze the hardness of the NAF problem. Firstly, it is easy to see that there are many forms for $A= \exp^{xR}\cdot\exp^{yT}$. For instance, $A=BC=B'C'$. Secondly, from Proposition 1, we get that the map $(x,y)\mapsto \exp^{xR}\cdot\exp^{yT}$ is an injection with respect to $R$ and $T$. Hence, it is with probability $1/|\mathbb{G}|\approx1/p^{n^2}$ at most to find a specific pair $(x,y)$ satisfing the maps $x\mapsto \exp^{xR}$, $y\mapsto \exp^{yT}$ and $\exp^{xR}\cdot\exp^{yT}$ simultaneously. Note that $|\mathbb{G}|<|\mathbb{M}|=p^{n^2}$ and $|\mathbb{G}|\approx|\mathbb{M}|=p^{n^2}$ when $p$ is large enough. As a result, we believe that the NAF problem is hard when $|\mathbb{G}|$ is large.

Furthermore, if $R$ and $T$ are noncommutative, so from Proposition 1 (items 1, 2 and 3), we conclude that $\exp^{xR}$ and $\exp^{yT}$ are non-commutative. In this paper, we always assume that $R$ and $T$ are non-commutative, $n\geq 5$ and $p$ is large enough.

It is quite interesting that solving the problem that given $\exp^{tX}\in \mathbb{G}$ and $X\in \mathbb{M}$ to compute $t$ does not help to solve  the NAF problem. It is because that once $R\neq T$, there does not exist any operation between $\exp^{xR}$ and $\exp^{yT}$ or between $\exp^R$ and $\exp^T$.

\begin{definition}[Non-abelian Inserting (NAI) Problem]
Let $\mathbb{M}=M_n(p)$ be a semigroup with respect to multiplication operation, and $\mathbb{G}=GL_n(p)$ the general linear group with respect to multiplication operation. Let $R, T\in \mathbb{M}$ $(R\neq T)$ be two random nilpotent matrices.The non-abelian inserting (NAI) problem with respect to $\mathbb{G}, R, T$, denoted by $\mathtt{NAI}_{\exp^R,\exp^T}^\mathbb{G}$, is to recover $\exp^{(a+c)R}\cdot\exp^{(b+d)T}$ from the given random pair ($\exp^{aR}\cdot\exp^{bT}$, $\exp^{cR}\cdot\exp^{dT}$)$\in \mathbb{G}^2$.
\end{definition}

It is easy to see that if the NAF problem is easy, then the NAI problem can be also solved. In particular, the adversary can use the solution of the NAF problem to get $\exp^{aR}$ and $\exp^{bT}$ with input $\exp^{aR}\cdot \exp^{bT}$. After that, the adversary can obtain the NAI solution $\exp^{aR}\cdot \exp^{cR}\cdot\exp^{dT} \cdot \exp^{bT}$.

Actually, due to the  non-commutability, the best solution for the NAI problem is to split one item of the NAI input into two parts, and then combine all of them together. It looks like one item of the NAI input is inserted into the other item. Hence, the name is obtained.

\section{Proposed Public Key Encryption Scheme in Lie Groups}\label{sec:proposal}
In this section, we will propose a new public key encryption scheme in Lie groups by using the FO technique \cite{PKE:FO99b}. In particular, our proposal is proven-secure against chosen ciphertext attacks in the random oracle model assuming the inserting problem is hard in the underlying Lie group.

\subsection{Description of the Proposal}
There exist three algorithms in our proposal: key pair generation algorithm $\KeyGen$, encryption algorithm $\Enc$, and decryption algorithm $\Dec$. The details are as follows.

\begin{description}
  \item[$\KeyGen(\kappa)$:] It takes the security parameters $\kappa_1,\kappa_2,\kappa_3,$ as input, it outputs a public key $pk=(\mathbb{M},\mathbb{G},S,T,\Delta,H_1,H_2,H_3)$, and the corresponding private key $sk=(\exp^{x\cdot S},\exp^{y\cdot T})$. The key pair satisfies the following requirements.
      \begin{itemize}
        \item $\mathbb{M}=M_n(p)$ is a semigroup with respect to multiplication operations.
        \item $\mathbb{G}=GL_n(p)$ is a non-abelian matrix Lie group with rank $n(n\geq 5)$.
        \item $p$ is a large prime number with $p=\Theta(2^{\kappa_1})$, and $|\mathbb{G}|=\Theta(p^{n^2})=\Theta(2^{n^2{\kappa_1}})$.
        \item $R, T \in \mathbb{M}$ are two random nilpotent matrices, and $\Delta=\exp^{s\cdot S}\cdot \exp^{t\cdot T}$, where $s\in \{0,1\}^{\kappa_3}$ and $t\in \{0,1\}^{\kappa_4}$ are random numbers.
        \item $H_1,H_2,H_3$ are three cryptographically secure hash functions: $H_1:\{0,1\}^{\kappa_2+\ell}\rightarrow \{0,1\}^{\kappa_3+\kappa_4}$, $H_2:\mathbb{G}\rightarrow \{0,1\}^{\kappa_2}$, and $H_3:\{0,1\}^{\kappa_2}\rightarrow \{0,1\}^{\ell}$, where $\ell$ is the bit length of the message.
      \end{itemize}
      At last, $s,t$ should be securely destroyed.
  \item[$\Enc(pk,m)$:] It takes a public key $pk=(\mathbb{M},\mathbb{G},S,T,\Delta,H_1,H_2,H_3)$ and a message $m\in \{0,1\}^{\ell}$ as input, it outputs the corresponding ciphertext $C=(C_1,C_2,C_3)$ by doing the following steps.
      \begin{itemize}
        \item Choose randomly a number $\sigma$ from $\{0,1\}^{\kappa_2}$.
        \item Compute $r_s||r_t=H_1(\sigma||m)$.
        \item Compute $C_1=H_2(\exp^{r_s\cdot S}\cdot \Delta \cdot \exp^{r_t\cdot T})\oplus \sigma$.
        \item Compute $C_2=\exp^{r_s\cdot S}\cdot \exp^{r_t\cdot T}$.
        \item Compute $C_3=H_3(\sigma)\oplus m$.
      \end{itemize}
  \item[$\Dec(sk,C)$:] It takes a private key $sk=(\exp^{s\cdot S},\exp^{t\cdot T})$ and a ciphertext $C=(C_1,C_2,C_3)$ as input, it outputs the corresponding message as follows.
      \begin{itemize}
        \item Compute $\sigma'=C_1\oplus H_2(\exp^{s\cdot S}\cdot C_2\cdot \exp^{t\cdot T})$.
        \item Compute $m'=C_3\oplus H_3(\sigma)$.
        \item Compute $r'_s||r'_t=H_1(\sigma'||m')$.
        \item Check whether both of $C_1=H_2(\exp^{r'_s\cdot S}\cdot \Delta \cdot \exp^{r'_t\cdot T})\oplus \sigma'$ and $C_2=\exp^{r'_s\cdot S}\cdot \exp^{r'_t\cdot T}$ hold. If they both hold, set $m=m'$; otherwise, set $m=\bot$.
        \item Output $m$.
      \end{itemize}
\end{description}

\paragraph{Correctness of the Proposal.}
The correctness of the proposal can be easily obtained by the following equalities.
\begin{eqnarray*}
   && \exp^{r_s\cdot S}\cdot \Delta \cdot \exp^{r_t\cdot T} \\
   &=& \exp^{r_s\cdot S}\cdot \exp^{t\cdot S}\cdot \exp^{t\cdot T} \cdot \exp^{r_t\cdot T} \\
   &=& \exp^{s\cdot S}\cdot \exp^{r_t\cdot S}\cdot \exp^{r_t\cdot T} \cdot \exp^{t\cdot T}\\
   &=&  \exp^{s\cdot S}\cdot C_2 \cdot \exp^{t\cdot T}
\end{eqnarray*}

\subsection{Security Analysis of the Proposal}
By the techniques used in \cite{PKE:FO99b}, we can prove that our proposal is secure against the chosen chiphertext attacks in the random oracle model assuming that the inserting problem in the Lie group is hard.

\begin{theorem}
The proposal is secure against the chosen chiphertext attacks in the random oracle model based on the NAI assumption in the Lie group.
\end{theorem}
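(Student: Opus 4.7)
The plan is to follow the Fujisaki--Okamoto framework \cite{PKE:FO99b} and reduce IND-CCA2 security of the proposal to the NAI assumption in the underlying Lie group. The key observation is that the value fed into $H_2$ during encryption,
\[
K \;=\; \exp^{r_s\cdot S}\cdot \Delta \cdot \exp^{r_t\cdot T} \;=\; \exp^{(r_s+s)\cdot S}\cdot \exp^{(r_t+t)\cdot T},
\]
is, from the point of view of someone who sees only $\Delta=\exp^{s\cdot S}\cdot\exp^{t\cdot T}$ and $C_2=\exp^{r_s\cdot S}\cdot\exp^{r_t\cdot T}$, exactly the target of an $\mathtt{NAI}_{\exp^{S},\exp^{T}}^{\mathbb{G}}$ instance. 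This identification is what drives the entire reduction.

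Given a PPT CCA adversary $\A$ with non-negligible advantage $\varepsilon$, I would construct a solver $\B$ for NAI as follows. On input a challenge $(\Delta^\ast,\, C_2^\ast)$, $\B$ instantiates the public key with $\Delta:=\Delta^\ast$, simulates $H_1,H_2,H_3$ as random oracles via lazy tables $L_1,L_2,L_3$, and answers $\A$'s hash queries in the usual way. When $\A$ submits its challenge pair $(m_0,m_1)$, $\B$ picks a bit $b$, a fresh $\sigma^\ast\getsr\bits{\kappa_2}$ and a fresh $\rho\getsr\bits{\kappa_2}$, and returns $C^\ast=(\rho\oplus\sigma^\ast,\, C_2^\ast,\, H_3(\sigma^\ast)\oplus m_b)$. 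Implicitly this programs $H_2$ at the unknown point $K^\ast$ to be $\rho$, and $C_2^\ast$ plays the role of $\exp^{r_s^\ast\cdot S}\cdot\exp^{r_t^\ast\cdot T}$ inherited from the NAI challenge.

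Two technical points then have to be handled in sequence: (i) simulating $\Dec$ without the secret key, and (ii) extracting the NAI solution from the oracle transcript. For (i), on each query $(C_1,C_2,C_3)$ the simulator performs \emph{plaintext extraction} by scanning $L_1$: for every recorded entry $H_1(\sigma\|m)=r_s\|r_t$ it tests whether $C_2=\exp^{r_s\cdot S}\cdot\exp^{r_t\cdot T}$, uses $L_2$ to evaluate $H_2(\exp^{r_s\cdot S}\cdot\Delta\cdot\exp^{r_t\cdot T})$ and check $C_1\oplus\sigma$, and finally verifies $C_3=H_3(\sigma)\oplus m$; a match triggers the output $m$, otherwise $\bot$. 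A standard game-hopping argument, relying on the injectivity given by Property~\ref{tb:1}, shows that $\A$'s view under this simulated decryption differs from the real one by at most $O(q_D q_{H_1}/2^{\kappa_2}+q_D/|\mathbb{G}|)$. For (ii), $\B$ picks one entry $W$ uniformly from $L_2$ at the end of the experiment and outputs it as its NAI guess; the standard FO argument shows that, unless $\A$ queries $H_2$ at $K^\ast$ or $H_3$ at $\sigma^\ast$, the challenge ciphertext is information-theoretically independent of $b$, so $K^\ast$ appears in $L_2$ with probability at least $\varepsilon - q_{H_3}/2^{\kappa_2}$, giving $\B$ an NAI-advantage of at least $(\varepsilon - q_{H_3}/2^{\kappa_2})/q_{H_2}$ minus the simulation error from (i).

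The main obstacle I expect is the soundness of the decryption simulation and its bookkeeping. One has to argue that the two re-encryption tests $C_1=H_2(\exp^{r'_s\cdot S}\cdot\Delta\cdot\exp^{r'_t\cdot T})\oplus \sigma'$ and $C_2=\exp^{r'_s\cdot S}\cdot\exp^{r'_t\cdot T}$ force \emph{any} accepting ciphertext (other than the challenge) to correspond to a prior $H_1$ query, except with negligible probability; this leans on both the injectivity of the exponential mapping and on $|\mathbb{G}|\approx p^{n^2}$ being large enough to make random collisions in $C_2$ values vanishingly rare. Once that lemma is in place, assembling the bounds from~(i) and~(ii) produces the claimed tight reduction from NAI to IND-CCA2 and completes the proof.
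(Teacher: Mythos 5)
Your proposal follows essentially the same reduction as the paper: embed the first NAI input as the public $\Delta$, the second as $C_2^\ast$, simulate the hash functions by lazy tables, simulate decryption by plaintext extraction from the $H_1$ table, and recover the NAI answer by sampling a random entry of the $H_2$ table with a $1/q_{H_2}$ loss. Your write-up is in fact more explicit than the paper's (which defers the probability analysis to \cite{PKE:FO99b}), but the underlying argument is the same.
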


\begin{proof}
If there exists an adversary $\mathcal{A}$ can break the CCA security of the proposal, then we can build another algorithm $\mathcal{B}$ solving the non-abelian inserting problem in the Lie group. That is, given $\Delta_1=\exp^{s_1\cdot S}\cdot \exp^{t_1\cdot T}\in\mathbb{G}$, $\Delta_2=\exp^{s_2\cdot S}\cdot \exp^{t_2\cdot T}\in \mathbb{G}$, and $S,T\in \mathbb{M}$, it aims to output $\Delta=\exp^{(s_1+s_2)\cdot S}\cdot \exp^{(t_1+t_2)\cdot T}$. The details are as follows.

\begin{description}
  \item[Setup:] $\mathcal{B}$ sets the public values $S,T,\Delta$ as $S,T,\Delta_1=\exp^{s_1\cdot S}\cdot \exp^{t_1\cdot T}$, respectively. Clearly, $\mathcal{B}$ has no idea about the corresponding private key $sk=(\exp^{s_1\cdot S},\exp^{t_1\cdot T})$.
  \item[Phase 1:] $\mathcal{B}$ builds the following oracles.
  \begin{itemize}
    \item Random Oracle $\mathcal{O}_{H_1}$: $\A$ sends $\sigma||m\in \{0,1\}^{\kappa_2+\ell}$ to this oracle, $\B$ firstly searches whether $(\sigma||m, \alpha)$ exists in Table $T_{H_1}$ that is empty at the beginning. If it exists, $\B$ returns $\alpha$ to $\A$; otherwise, $\B$ chooses a random number $\alpha$ from $\{0,1\}^{\kappa_3+\kappa_4}$, records $(\sigma||m, \alpha)$  into Table $T_{H_1}$, and sends $\alpha$ to $\A$.
    \item Random Oracle $\mathcal{O}_{H_2}$: $\A$ sends $R\in \mathbb{G}$ to this oracle, $\B$ firstly searches whether $(R, \beta)$ exists in Table $T_{H_2}$ that is empty at the beginning. If it exists, $\B$ returns $\beta$ to $\A$; otherwise, $\B$ chooses a random number $\beta$ from $\{0,1\}^{\kappa_2}$, records $(R, \beta)$  into Table $T_{H_2}$, and sends $\beta$ to $\A$.
    \item Random Oracle $\mathcal{O}_{H_3}$: $\A$ sends $\sigma\in \{0,1\}^{\kappa_2}$ to this oracle, $\B$ firstly searches whether $(\sigma, \gamma)$ exists in Table $T_{H_2}$ that is empty at the beginning. If it exists, $\B$ returns $\gamma$ to $\A$; otherwise, $\B$ chooses a random number $\gamma$ from $\{0,1\}^{\ell}$, records $(\sigma, \gamma)$  into Table $T_{H_2}$, and sends $\gamma$ to $\A$.
    \item Decryption Oracle $\mathcal{O}_{dec}$: $\A$ sends a ciphertext $C=(C_1,C_2,C_3)\in \{0,1\}^{\kappa_2}\times \mathbb{G} \times \{0,1\}^{\ell}$ to this oracle, $\B$ firstly searches  $(\sigma,m,\alpha,\beta,\gamma)$ in tables $T_{H_1}$, $T_{H_2}$ and $T_{H_3}$, where $\alpha_s||\alpha_t=\alpha=H_1(\sigma||m)$, $C_1=\beta\oplus \sigma$, $C_2=\exp^{\alpha_x\cdot S}\cdot \exp^{\alpha_t\cdot T}$, and $C_3=\gamma\oplus m$. It it exists, $\B$ sends $m$ to $\A$; otherwise, $\B$ sends $\bot$ to $\A$.
  \end{itemize}
  \item[Challenge:] $\A$ sends $\B$ two messages $m_0,m_1\in \{0,1\}^{\ell}$ with equal bit length. $\B$ computes $C^*=(C^*_1,C^*_2,C^*_3)$ as follows.
      \begin{itemize}
        \item Choose random $\sigma^*,\beta^*$ from $\{0,1\}^{\kappa_2}$, and compute $C^*_1=\sigma^*\oplus \beta^*$.
        \item Set $C_2^*=\Delta_2$.
        \item Compute $C_3^*=H_3(\sigma)\oplus m_b$, where $b$ is a random number from $\{0,1\}$.
      \end{itemize}
      At last, $\B$ sends $C^*$ to $\A$ as the challenge ciphertext.
  \item[Phase 2:] It is almost the same as Phase 1, except that $\A$ cannot directly send $C^*$ to the decryption oracle $\mathcal{O}_{dec}$.
  \item[Guess:] $\A$ outputs the guess $b'$ on $b$. $\B$ randomly chooses $R$ from Table $T_{H_2}$, and sets $\Delta$ as $R$. If $\A$ can output a correct guess, then $R$ is the right $\Delta$ with probability $1/q_{H_2}$ at least, where $q_{H_2}$ is the maximum number of queries to the random oracle $\mathcal{O}_{H_2}$ by $\A$.
\end{description}
Similar with the analysis in \cite{PKE:FO99b}, we can conclude that our proposal is secure against chosen ciphertext attacks based on the NAI assumption. \qed
\end{proof}

\subsection{Quantum Algorithm Attacks}
Since the publication of Shor's quantum algorithm for solving IFP and DLP \cite{S97}, many mathematicians devote into developing secure public key cryptosystems based on non-abelian algebra.
It is unclear that how to use Shor¡¯s quantum algorithm to break the intractability assumption of the $\mathtt{NAI}_{\exp^R,\exp^T}$ problem.

 Recall that Shor's algorithm \cite{S97} consists of two parts: a quantum algorithm to solve the order-finding problem over $\mathbb{Z}_n^*$ and a classical reduction of factoring $n$ to the problem of order finding. Now, let us show that even if a quantum algorithm for solving the order-finding problem over a non-abelian group $\mathbb{G}$ is at hand, at present we still have no reductions, either classical or quantum for underlying problem. In fact, the exponential mapping is completely different from exponential operation in finite fields. Moreover, since $R$ and $T$ are both nilpotent matrices, there is no order of a nilpotent matrix. Hence, Shor's algorithm cannot work for this case.

 On the other hand, in order to obtain the pair $(\exp^{xR},\exp^{yT})$, we have to factorize $\exp^{xR}\cdot\exp^{yT}\in \mathbb{G}$. But until now, there is no efficient classical algorithms or quantum algorithms for factoring a general matrix into two specific matrices.

 Consequently, our scheme is secure against known classical and quantum algorithms.

\subsection{Efficiency Analysis}
In this section, we would like to analyze the efficiency of our proposal and how to choose the security parameters. In particular, we have the followings.

\begin{enumerate}
  \item
  Key generation algorithm requires two exponential mappings of two nilpotent matrices $S$ and $T$, and the core parameters of the public key (pk) and the secret key (sk) are the triple ($S, T, \exp^{sS}\cdot\exp^{tT}$) and the pair $(\exp^{sS},\exp^{tT})$, respectively. They are $3|p^{n^2}|$ and $2|p^{n^2}|$ bit length respectively. Here, we ignore the part of the parameters to describe $\mathbb{M},\mathbb{G},H_1,H_2,H_3$.

  \item
  Encryption algorithm requires two exponential mappings to compute $\exp^{r_sS}$ and $\exp^{r_tT}$ and additional three multiplications to get the final ciphertext. Similarly, the cost for evaluating $H_1$, $H_2$ and $H_3$ is ignored without loss of generality. The bit length of one ciphertext is $\kappa_2+|p^{n^2}|+\ell$.

  \item
  Decryption algorithm does not need any exponential mappings but only two multiplications to get the message, while it needs two exponential mappings and three multiplications to check the validity of the ciphertext. The cost of evaluating hash functions are still ignored.

  \item According the results in Section 2, the ranges of $s,r_s,t,r_t$ could be extended to $\mathbb{Z}$. In order to easy implementation, we set the ranges as $\{0,1\}^{\kappa_3}$ and $\{0,1\}^{\kappa_4}$ in the description of our proposal. On the other hand, $\kappa_3$ and $\kappa_3$ should be large enough to resist against the brute force attack. Recall the analysis of the NAF problem, the hardness is related to $|\mathbb{G}|\approx p^{n^2}$. Hence, $\kappa_1= |p|$ and $n$ should be large enough to make |G| large. At last, $\kappa_2$ could be set as that in \cite{PKE:FO99b}.

%
\end{enumerate}

\section{Conclusion}\label{sec::conclusions}

The invention of Shor's quantum algorithm for solving integer factorization problem and discrete logarithm problem casts distrust on many public key cryptosystems used today. This urges us to develop secure public key cryptosystems based on variety platforms, such as non-abelian algebra. In this paper, we at first presented two new intractable assumptions by using the exponential mapping in Lie group. Subsequently, we proposed a new public key encryption schemes based on Lie groups and Lie algebras. Our proposals are proved to be CCA secure in the random oracle model.

\section*{Acknowledgements}
This work is partially supported by the National Natural Science Foundation of China (NSFC) (Nos.61502048, 61370194) and the NSFC A3 Foresight Program (No.61411146001).

%

\begin{thebibliography}{99}


\bibitem{BCM09} Blackburn S R, Cid C, Mullan C. Cryptanalysis of the $MST_3$ Public Key Cryptosystem. J Math Crypt, 2009, 3:321-338



\bibitem{B12}Blaser M. Noncommutativity makes determinants hard. Electr Coll Comp Complex Report. 2012, No.142



\bibitem{CV06} Caranti A, Volta D F.The Round Functions of Cryptosystem
PGM Generate the Symmetric Group. Designs Codes Cryptogr,2006, 38:147-155





\bibitem{EK04}Eick B, Kahrobaei D. Polycyclic groups: A new platform for cryptology,
math.GR/0411077 (2004), 1-7.


\bibitem{PKE:FO99b}Fujisaki E. and Okamoto T. Secure integration of asymmetric and symmetric encryption schemes. In CRYPTO 1999, volume 1666 of LNCS, pages 537-554,
   1999.


\bibitem{H03}Hall B. C. Lie Groups,Lie Algebras and Representations. London:Springer-Verlag,Graduate Texts in Mathematics, vol 222, 2003.



\bibitem{H63}Higman G. Suzuki 2-groups. Illinois J Math, 1963, 7:79-96


\bibitem{HKK13}Habeeb M, Kahrobaei D, Koupparis C, et al. Public key exchange using semi-direct product of(semi)groups. https://eprint.iacr.org/2013/226.pdf.

\bibitem{J12}Jacobson N.Lectures in Abstract Algebra: III. Theory of Fields and Galois Theory.London:Springer-Verlag,Graduate Texts in Mathematics, 2012.


\bibitem{KLCHKP00}Ko K, Lee S, Cheon J, Han J, Kang J, Park C. New public-key cryptosystem using braid groups. In: Proceedings of advances in cryptology-CRYPTO 2000.Berlin:Springer,2000.166-183


\bibitem{KA13}Kahrobaei D, Anshel M. Design and Search in Non-abelian Cramer Shoup Public Key Cryptosystem. http://arxiv.org/pdf/1309.4519.pdf

\bibitem{KKS13}Kahrobaei D, Koupparis C, Shpilrain V. Public key exchange using matrices over group rings. Groups Complexity and Cryptology, 2013. 5:97-115


\bibitem{KKS14} Kahrobaei D, Koupparis C, Shpilrain V. A CCA secure cryptosystem using matrices over group rings. http://www.sci.ccny.cuny.edu/shpil/CCA\_Group\_Rings.pdf


\bibitem{LMTW09}
Lempken W, Magliveras S S, van Trung T, Wei W. A public key cryptosystem based on non-abelian finite groups. J Cryptol, 2009, 22:62-74


\bibitem{MSTZ08}
Magliveras S S, Svaba P, van Trung T, Zajac P.On the security of a realization of cryptosystem $MST_3$. Tatra Mt Math Publ, 2008, 41:1-13

\bibitem{M86}
Magliveras S S. A cryptosystem from logarithmic signatures of finite groups. In:Proceedings of the 29th Midwest Symposium on Circuits and Systems. Amsterdam:Elsevier Publishing Company,1986.972-975


\bibitem{M75}Marchuk G. I. Methods of Numerical Mathematics. New York:Springer-Verlag, 1975.

\bibitem{MM89}Magliveras S S, Memon N D. Properties of cryptosystem PGM. In:Proceedings of Advances in Cryptology Crypto'89. Berlin:Springer-Verlag,1989.447-460


\bibitem{MM90}Magliveras S S, Memon N D. Complexity tests for cryptosystem PGM. Congressus Numerantium, 1990, 79:61-68



\bibitem{MM92}Magliveras S S, Memon N D. Algebraic properties of cryptosystem PGM. J Cryptol, 1992, 5:167-183


 \bibitem{MS97}Marius van de Put, Singer M. F.  Galois theory of difference equations.Berlin: Springer, Lecture Notes in Mathematics, VOL 1666,1997.


\bibitem{MST02}
Magliveras S S, Stinson D R, van Trung T. New approaches to designing public key cryptosystems using one-way functions and trapdoors in finite groups. J Cryptol, 2002, 15:285-297


\bibitem{MSU11}Myasnikov A, Shpilrain V, Ushakov A. American
Mathematical Society Surveys and Monographs. In Non-commutative cryptography and complexity of grouptheoretic problems, 177, American Mathematical Society: Providence, Rhode Island, 2011.

\bibitem{news}IBM~News Room. IBM Makes Quantum Computing Available on IBM Cloud to Accelerate Innovation. Online, May 2016. http://www-03.ibm.com/press/us/en/pressrelease/49661.wss.
\bibitem{PZ03}Proos J, Zalka C. Shor's discrete logarithm quantum algorithm for elliptic curves. Quan Inf \& Comp, 2003, 3(4): 317-344

\bibitem{R06}Rotteler M. Quantum algorithms: A survey of some recent results. Info Fors Entw, 2006,21(1):3-20


\bibitem{S97}Shor P. Polynomial time algorithms for prime factorization and discrete
logarithms on quantum computers. SIAM Journal on Computing, 1997, 26(5):1484-1509

\bibitem{ST10}Svaba P,van Trung T. Public key cryptosystem MST3:
cryptanalysis and realization. J Math Cryptol, 2010, 4: 271-315



\bibitem{SU05}Shpilrain V, Ushakov A. Thompson¡¯s group and public key cryptography. Lecture Notes Comp Sc, 2005, 3531:151-164


\bibitem{WM84}Wagner N, Magyarik M.A public-key cryptosystem based on the word problem. In: Proceedings of advances in cryptology. Berlin:Springer,1985.19-36

\bibitem{VPD10}Gonz\'{a}lez Vasco M I, P\'{e}rez del Pozo A L, Duarte P T. A note on the
security of $MST_3$. Des Codes Cryptogr, 2010, 55:189-200

\end{thebibliography}

\end{document}